\documentclass[conference,onecolumn]{IEEEtran}
\IEEEoverridecommandlockouts

\usepackage{todonotes}
\makeatletter
  \renewcommand{\todo}[2][]{\@todo[inline,color=yellow!20,#1]{#2}\ignorespaces}
\makeatother

\usepackage{comment}

\newtheorem{theorem}{Theorem}
\newtheorem{remark}{Remark}

\newtheorem{proposition}{Proposition}
\newtheorem{example}{Example}

\usetikzlibrary{arrows.meta, calc}
\definecolor{customblue}{RGB}{0, 0, 0}

\usepackage{cite}
\usepackage{amsmath,amssymb,amsfonts}
\usepackage{algorithmic}
\usepackage{graphicx}
\usepackage{textcomp}
\usepackage{xcolor}
\def\BibTeX{{\rm B\kern-.05em{\sc i\kern-.025em b}\kern-.08em
    T\kern-.1667em\lower.7ex\hbox{E}\kern-.125emX}}
\begin{document}


\title{On Decentralized Sum-Rate Maximization with Successive Interference Cancellation
\thanks{This work was supported by the Comunidad Aut\'onoma de Madrid Atracci\'on de Talento grant 2020-T1/TIC-20698 and the Generaci\'on de Conocimiento grant PID2022-142506NA-I00.}
}

\author{\IEEEauthorblockN{David Garrido}
\IEEEauthorblockA{\textit{Signal Theory and Communications} \\
\textit{Universidad Carlos III de Madrid}\\
Leganes, Madrid \\
davgarri@ing.uc3m.es}
\and
\IEEEauthorblockN{Marcos M. Vasconcelos}
\IEEEauthorblockA{\textit{Electrical and Computer Engineering} \\
\textit{Florida State University}\\
Tallahassee, Florida \\
marcos@eng.famu.fsu.edu}
\and
\IEEEauthorblockN{Borja Peleato}
\IEEEauthorblockA{\textit{Signal Theory and Communications} \\
\textit{Universidad Carlos III de Madrid}\\
Leganes, Spain \\
bpeleato@ing.uc3m.es}
}

\maketitle

\begin{abstract}
Successive Interference Cancellation (SIC) is a powerful technique for managing interference in wireless networks, yet its optimal deployment in decentralized environments remains a challenge.
This study investigates joint power and rate allocation in a two-user Gaussian interference channel incorporating SIC at the receivers.
We characterize the global optimal solutions of the problem, and recognizing the limitations of centralized coordination, we introduce a novel decentralized algorithm for a symmetric channel configuration.
Numerical results demonstrate that even without global Channel State Information, our proposed algorithm significantly outperforms traditional benchmarks, such as Orthogonal Access which suffers from temporal underutilization or greedy strategies that fail to exploit SIC gains.
\end{abstract}

\begin{IEEEkeywords}
Interference channel, interference cancellation, decentralized, rate allocation, SIC, two-by-two.
\end{IEEEkeywords}

\section{Introduction}

Interference is a fundamental limitation in multi-user communication systems, particularly in wireless networks where time and spectrum are inherently scarce resources. Among the many approaches proposed to improve spectral efficiency, Non-Orthogonal Multiple Access (NOMA) has attracted significant attention by allowing multiple users to share the same time–frequency resources~\cite{saito2013non}. In its most common form, NOMA relies on Superposition Coding (SC) at the transmitter and Successive Interference Cancellation (SIC) at the receiver. Nevertheless, alternative NOMA schemes have been proposed that do not rely on SC \cite{jafarkhani2024modulation} or SIC \cite{chung2021correlated}. 

A central challenge in NOMA systems lies in selecting appropriate powers and rate allocations for each signal. Although
numerous algorithms have been proposed for resource allocation in interference-limited networks (e.g., \cite{naderializadeh2022state,tabrizi2015spatial,garrido2023resource,10888922}), most assume independent transmitter–receiver pairs and simply treat interference as noise. As a result, they do not explicitly account for the possibility of interference cancellation at the receivers~\cite{chandra2024analyzing} -- an approach that could substantially improve the performance of the system. 

From an information-theoretic perspective, the capacity region of the general interference channel remains unknown, even for the two-user Gaussian case. This motivates the study of practical transmission strategies that exploit additional receiver-side capabilities~\cite{zeng2017sum,ding2015application,sun2015ergodic}. In this work, we investigate joint power and rate allocation in the two-user Gaussian interference channel while explicitly incorporating SIC at the receivers. Unlike most prior approaches, our framework allows receivers to opportunistically decode and cancel interference when beneficial, enabling operating points beyond the conventional ``treat-interference-as-noise'' regime~\cite{li2018maximum,trankatwar2024power}.

We focus on a simplified yet insightful setting: the two-by-two Gaussian interference channel in which the cross-channel gains are weaker than the direct gains. For this model, we derive structural properties of the optimal (centralized) power and rate allocation when SIC is available. Leveraging these properties, we propose a low-complexity heuristic algorithm for adaptive power and rate control that can be implemented in a decentralized manner. We evaluate its performance in terms of sum-rate maximization and compare it against both the centralized optimal solution and several baseline schemes. Numerical results show that the proposed algorithm significantly outperforms selfish (greedy) rate maximization and (traditional) orthogonal access strategies based on time or bandwidth splitting.

\section{System model}

\begin{figure}
    \centering
    \includegraphics[width=0.4\linewidth]{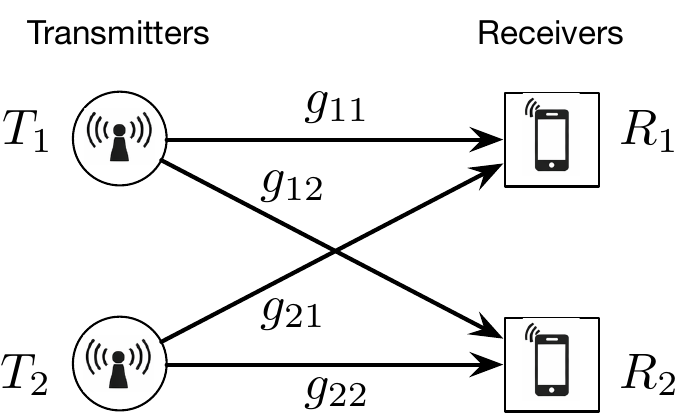}
    \caption{Interference channel model considered in this paper.}
    \label{fig:two_by_two_system}
\end{figure}

We consider the system shown in Fig.~\ref{fig:two_by_two_system}, with two transmitters $T_1, T_2$ and two receivers $R_1, R_2$, where $T_i$ transmits to $R_i$, $i=1,2$.
The signal received by $R_i$ at time $t$ is given by
\begin{equation}
    y_i(t) = \sqrt{g_{1i}} x_1(t) + \sqrt{g_{2i}} x_2(t) + z_i(t),
\end{equation}
where $x_j(t)$ is the signal transmitted by $T_j$ at time $t$ (if any), $g_{ji}$ the (power) path loss from $T_j$ to $R_i$, and $z_i(t)$ the additive Gaussian noise at $R_i$. 
For simplicity, we assume that the noise is independent and identically distributed with zero mean and variance $\sigma^2$ for all receiver nodes, i.e., $z_i(t) \sim \mathcal{N}(0,\sigma)$, $i\in\{1,2\}$, for all $t$. We neglect propagation delays. 

The capacity in bits per second per Hz [bps/Hz] of a band-limited wireless link treating interference as noise (without SIC) is given by~\cite{cover1999elements}: 
\begin{equation} \label{eq:cap}
    c_{ij} = \log_2 \left( 1 + \frac{g_{ij} \gamma_i}{g_{kj}\gamma_k + 1} \right),
\end{equation}
$i,j,k \in \{1, 2\}$ and $k \neq i$. 
$c_{ij}$ denotes the capacity at receiver $R_j$ for the signal transmitted by $T_i$ and $\gamma_i$ represents the Signal-to-Noise Ratio (SNR) for transmitter $i$. For a transmit power of $P_i$, we have $\gamma_i = P_i / \sigma^2$. Conversely, a receiver successfully cancelling the interference can achieve a rate of
\begin{equation} \label{eq:cap_sic}
    c_{ii} = \log_2 \left( 1 + g_{ii} \gamma_i \right).
\end{equation}

A receiver can decode and cancel an interfering signal if the rate of that signal is below the capacity in Eq.~(\ref{eq:cap}). When that happens, the desired transmitter can use a higher rate, ideally reaching its interference-free capacity in Eq.~(\ref{eq:cap_sic}). Each transmitter is subject to a peak power constraint $P_i \leq P_i^{max}$, which induces a maximum SNR $\gamma_i^{max} = P_i^{max} / \sigma^2$.

Let $r_i$ denote the data rate of transmitter $T_i$. This rate is fundamentally limited by the link capacity, which depends on whether SIC is employed. The total system throughput, or sum-rate, is defined as $r_{sum} = r_1 + r_2$.

Throughout this paper, we employ the superscript $(\cdot)^\star$ to denote the optimal value of a given parameter. Additionally, $\mathbb{E} [\cdot]$ represents the expectation operator, and we define the auxiliary function $\phi(x)=\log_2(1+x)$ to simplify the expressions.

\section{Problem formulation and Sum-Rate Optimal Rate-Power Allocation Strategies} \label{sec:prob_form}

In this paper, the main focus is on the optimal decoding strategies and power allocation for the interference channel of Fig.~\ref{fig:two_by_two_system}.
Let $\gamma_i \in [0,\gamma_i^{\max}]$ denote the transmit power of user
$i\in\{1,2\}$, and let $r_i$ denote its achievable rate.
The sum-rate maximization problem is
\begin{equation}\label{prob:sum_rate_global}
\begin{aligned}
\underset{\mathcal{S},\, r_1,r_2,\, \gamma_1,\gamma_2}{\text{maximize}}\quad
& r_1 + r_2 \\
\text{subject to}\quad
& 0 \le \gamma_1 \le \gamma_1^{\max}, \\
& 0 \le \gamma_2 \le \gamma_2^{\max}, \\
& (r_1,r_2) \in \mathcal{R}\!\left(\mathcal{S}; \gamma_1,\gamma_2\right), \\
& \mathcal{S} \in \{\text{No-SIC},\, \text{Partial-SIC},\, \text{Full-SIC}\},
\end{aligned}
\end{equation}
where $\mathcal{R}(\mathcal{S};\gamma_1,\gamma_2)$ denotes the achievable
rate region induced by the decoding architecture $\mathcal{S}$.

To solve the problem in Eq.~\eqref{prob:sum_rate_global}, we decompose the problem of maximizing the total rate $r_{sum}$ into three sub-problems and select the best solution among them:

\begin{enumerate}
\item \textbf{No-SIC:}
Without SIC, the problem is a standard sum-rate maximization treating interference as noise:
\begin{equation} \label{prob:noSIC}
\begin{array}{ll}
\underset{\gamma_1,\gamma_2}{\text{maximize}}
 & \phi \left(\frac{g_{11}\gamma_1}{g_{21}\gamma_2 + 1}\right) + \phi \left(\frac{g_{22}\gamma_2}{g_{12}\gamma_1 + 1}\right) \\
\text{subject to}
&  0 \leq \gamma_i \leq \gamma_i^{max}, \quad i=1,2
\end{array}
\end{equation}
A solution can be found with 
\begin{equation}
\gamma_1^\star\in\{0,\gamma_1^{\max}\},\qquad \gamma_2^\star\in\{0,\gamma_2^{\max}\}.
\end{equation}

\begin{IEEEproof}
First observe that, since $\phi(x)=\log_2(1+x)$, maximizing the objective is equivalent to maximizing
\begin{equation}
    \log_2 (g_{21} \gamma_2 + 1 + g_{11} \gamma_1 ) - \log_2 (g_{21} \gamma_2 + 1 ) + \log_2 (g_{12} \gamma_1 + 1 + g_{22} \gamma_2) - \log_2 (g_{12} \gamma_1 + 1 )\label{eq:objective_general}
\end{equation}
The partial derivates are 
\begin{align}
    \partial_{\gamma_1} &= \frac{g_{11}}{g_{21} \gamma_2 + 1 + g_{11} \gamma_1}
- \frac{g_{12} g_{22} \gamma_2}{( g_{12} \gamma_1 + 1 ) ( g_{12} \gamma_1 + 1 + g_{22} \gamma_2 )},\label{eq:partial_gamma1_general}\\
   \partial_{\gamma_2} &= \frac{g_{22}}{g_{12} \gamma_1 + 1 + g_{22} \gamma_2}
- \frac{g_{21} g_{11} \gamma_1}{( g_{21} \gamma_2 + 1 ) ( g_{21} \gamma_2 + 1 + g_{11} \gamma_1 )}.\label{eq:partial_gamma2_general}
\end{align}

Since the $\log_2$ is a differentiable function, its maximum (if one exists) must fulfil $\partial_{\gamma_1}=\partial_{\gamma_2}=0$. Expanding the above expressions, equating them to zero, and cancelling the denominators yields:
\begin{align}
    g_{11} ( g_{12} \gamma_1 + 1) (g_{12} \gamma_1 + 1 + g_{22} \gamma_2) = g_{12} g_{22} \gamma_2 (g_{21} \gamma_2 + 1 + g_{11} \gamma_1 ),\\
    g_{22} ( g_{21} \gamma_2 + 1) (g_{21} \gamma_2 + 1 + g_{11} \gamma_1) = g_{21} g_{11} \gamma_1 (g_{12} \gamma_1 + 1 + g_{22} \gamma_2 ).
\end{align}
After some calculations, the above equations are equivalent to
\begin{align}
    g_{11} ( g_{12} \gamma_1 + 1 ) ^2 &= g_{12} g_{22} \gamma_2 (g_{12} \gamma_2 + 1 + g_{11} \gamma_1 ) - g_{11} g_{22} \gamma_2 (g_{12} \gamma_1 + 1)  \\
    &= g_{12} g_{22} \gamma_2 (g_{21} \gamma_2 + 1) - g_{11} g_{22} \gamma_2 \nonumber ,\\
    g_{22} ( g_{21} \gamma_2 + 1 ) ^2 &= g_{21} g_{11} \gamma_1 (g_{12} \gamma_1 + 1) - g_{22} g_{11} \gamma_1 .
\end{align}

Let
\begin{align}
A & = g_{12} \gamma_1 + 1\\ 
B & = g_{21} \gamma_2 + 1,
\end{align}
then the above equations are equivalent to
\begin{align}
    g_{11} g_{21} A ^2 = (B - 1)( g_{12} g_{22} B - g_{11} g_{22} ),\label{eq:intermediate1_general}\\
    g_{22} g_{12} B ^2 = (A - 1)( g_{21} g_{11} A - g_{22} g_{11} ).\label{eq:intermediate2_general}
\end{align}
Eqs.~(\ref{eq:intermediate1_general})~and~(\ref{eq:intermediate2_general}) form a system of equations. Adding them to solve it yields
\begin{equation}
    ( g_{21} + g_{22} ) g_{11} A + (g_{12} + g_{11}) g_{22} B - 2 g_{11} g_{22} = 0
\end{equation}
or equivalent 
\begin{equation}
    A = \frac{2 g_{11} g_{22} - (g_{12} + g_{11}) g_{22} B}{( g_{21} + g_{22} ) g_{11} }.
\end{equation}
Since $ g_{12} > 0$ and $g_{21} >0$, it implies $B>1$ and $A>1$. If $B>1$, then
$A < \frac{2 g_{11} g_{22} - (g_{12} + g_{11}) g_{22} }{( g_{21} + g_{22} ) g_{11} }$, rearranging terms
\begin{equation}
    A < \frac{(g_{11} - g_{12})}{g_{11}} \frac{g_{22}}{(g_{22} + g_{21})}.
\end{equation}
Both fractions are smaller than one, then $A < 1$.
Therefore, there exists no solution for Eqs.~(\ref{eq:intermediate1_general})~and~(\ref{eq:intermediate2_general}) with $A>1$ and $B>1$. Equivalently, there exist no $\gamma_1>0$ and $\gamma_2>0$ for which the partial derivatives in Eqs.~(\ref{eq:partial_gamma1_general})~and~(\ref{eq:partial_gamma2_general}) are zero. As a consequence, there exists no local maxima for Eq.~(\ref{eq:objective_general}) with $\gamma_1>0$ and $\gamma_2>0$ and the solution for problem~(\ref{prob:noSIC}) must be at the boundary.

\end{IEEEproof}

\item \textbf{Partial-SIC:}
Only one receiver performs interference cancellation. For this to be feasible, the other signal must be decodable by both receivers. Assuming, without loss of generality (wlog), that $R_2$ is the one performing SIC, the problem becomes:
\begin{equation} \label{eq:gen_max_prob_partial}
\begin{array}{ll}
\underset{r_1,\gamma_1,\gamma_2}{\text{maximize}}
& r_1 + \phi \left( g_{22}\gamma_2 \right) \\
\text{subject to}
& r_1 \leq \min \left\{ \phi \left( \frac{g_{12}\gamma_1}{g_{22}\gamma_2 + 1}\right) , \phi \left( \frac{g_{11}\gamma_1}{g_{21}\gamma_2 + 1}\right) \right\} \\
& 0 \leq \gamma_i \leq \gamma_i^{max}, \quad i=1,2.
\end{array}
\end{equation}
The first constraint ensures that $r_1$ is decodable at $R_2$ (for cancellation) and at $R_1$ (for information recovery). A solution can be found with 
\begin{equation}
\gamma_1^\star = \gamma_1^{\max},\qquad \gamma_2^\star\in\{0,\gamma_2^{\max}\}.
\end{equation}
\begin{IEEEproof}
    The $\min \{ \cdot \}$ expression yields two possible cases based on the values of
\begin{equation} \label{eq:gamma_thres}
    k_t = \frac{g_{11}-g_{12}}{g_{12}g_{21}-g_{11}g_{22}}.
\end{equation}

In the first case, $\gamma_2 > k_t$, then $\phi \left( \frac{g_{12}\gamma_1}{g_{22}\gamma_2 + 1}\right) < \phi \left( \frac{g_{11}\gamma_1}{g_{21}\gamma_2 + 1}\right)$ and the total rate is 
\begin{equation} \label{eq:partial_1}
    r_{sum}
    = \phi \left( \frac{g_{12}\gamma_1}{g_{22}\gamma_2 + 1}\right)  + \phi \left(1 + g_{22}\gamma_2 \right)
\end{equation}
Since the second term in Eq.~(\ref{eq:partial_1}) is independent of $\gamma_1$, the first term is maximized at $\gamma_1 = \gamma_1^{max}$.
With $\gamma_1$ fixed, Eq.~(\ref{eq:partial_1}) becomes a function of a single variable. Its derivative with respect to $\gamma_2$ is

\begin{equation}
    \frac{dr_{sum}}{d\gamma_2} = \frac{g_{22}}{g_{22} \gamma_2+1}-\frac{g_{12} g_{22} \gamma_1}{(g_{22} \gamma_2+1)^2 + (g_{22} \gamma_2+1)g_{12} \gamma_1 }.
\end{equation}
This derivative never equals zero for any $\gamma_2 >0$, therefore Eq.~(\ref{eq:partial_1}) is monotonically increasing, and the maximum is attained at $\gamma_2 = \gamma_2^{max}$.

In the second case, where $\gamma_2 < k_t$, then $\phi \left(\frac{g_{12}\gamma_1}{g_{22}\gamma_2 + 1}\right) > \phi \left(1 + \frac{g_{11}\gamma_1}{g_{21}\gamma_2 + 1}\right)$ and the total rate is given by
\begin{equation} \label{eq:par_sic_2}
    r_{sum} 
    = \phi \left( \frac{g_{11}\gamma_1}{g_{21}\gamma_2 + 1}\right)  + \phi \left( g_{22}\gamma_2 \right).
\end{equation}
As in the previous case, $\gamma_1 = \gamma_1^{max}$ is optimal and, differentiating with respect to $\gamma_2$ to find the critical points, we obtain:
\begin{equation}
    \frac{dr_{sum}}{d\gamma_2} = \frac{g_{22}}{g_{22} \gamma_2+1} - \frac{g_{21}g_{11}\gamma_1}{(g_{21}\gamma_2+1)^2 + ( g_{21}\gamma_2+1)g_{11}\gamma_1}.
\end{equation}
The derivative equals zero at $\gamma_2 = \frac{1}{g_{21}} \left[- 1 \pm \sqrt{\frac{\gamma_1g_{11}}{g_{22}} (g_{21}-g_{22})} \right]$. The negative solution lies outside the constrain of the problem. Consequently, there is only one valid critical point at $\gamma_2 = \frac{1}{g_{21}} \left[ \sqrt{\frac{\gamma_1g_{11}}{g_{22}} (g_{21}-g_{22})} - 1 \right]$ which is a minimum.

To prove that this critical point is a minimum, note that $g_{21}>g_{22}$ is required for the critical point to be real-valued. Under this condition, the sum-rate function $r_{sum}(\gamma_2)$ defined in Eq.~(\ref{eq:par_sic_2}) has a domain $\gamma_2\in(- \frac{1}{g_{21}},\infty)$. 

Given $\lim_{\gamma_2 \rightarrow -\frac{1}{g_{21}}} r_t = \infty$ and $\lim_{\gamma_2 \rightarrow \infty} r_t = \infty$, 
since there is only on critical point, that point must be a global minimum.
Therefore, the maximum of Eq.~(\ref{eq:par_sic_2}) must occur at the boundaries $\gamma_2 = 0$ or $\gamma_2 = \gamma_2^{max}$.
Combining both cases, the optimal power allocation for the Partial-SIC scenario, is $\gamma_1 = \gamma_1^{max}$ and $\gamma_2 = \{ 0,\gamma_2^{max} \}$.

\end{IEEEproof}

\item \textbf{Full-SIC:}
When both users perform SIC, the sum-rate maximization problem is
\begin{equation} \label{prob:fullsic1}
\begin{array}{ll}
\underset{r_1,r_2,\gamma_1,\gamma_2}{\text{maximize}}
& r_1 + r_2 \\
\text{subject to}
& r_1 \leq \min \left\{  \phi \left( \frac{g_{12}\gamma_1}{g_{22}\gamma_2 + 1}\right) , \phi \left(g_{11} \gamma_1 \right) \right\}, \\
& r_2 \leq \min \left\{  \phi \left( \frac{g_{21}\gamma_2}{g_{11}\gamma_1 + 1}\right) , \phi \left( g_{22} \gamma_2 \right) \right\}, \\
& 0 \leq \gamma_i \leq \gamma_i^{max}, i=1,2.
\end{array}
\end{equation}
A solution can be found with
\begin{equation} \label{eq:global_sol}
\begin{array}{ll}
     &  \gamma_1^\star = \left\{ 0 ,  \frac{g_{21}-g_{22}}{g_{11}g_{22}} , \gamma_1^{max} \right\}, \\
     & \gamma_2^\star = \left\{ 0 , \frac{g_{12}-g_{11}}{g_{11}g_{22}} , \gamma_2^{max} \right\}.
\end{array}
\end{equation}

\end{enumerate}

\begin{IEEEproof}
The arguments of the $\min \{ \cdot \}$ expressions attain equality at $\gamma_1 = k_1$ and $\gamma_2=k_2$, where $k_1 = \frac{g_{21}-g_{22}}{g_{11}g_{22}}$ and $k_2 = \frac{g_{12}-g_{11}}{g_{11}g_{22}}$.
These thresholds define four regions, each leading to a different rate configuration and a corresponding maximization sub-problem. The global optimum for the Full-SIC case is the maximum among the solutions to these four sub-problems.
The four regions are determined by combinations of the following expressions:
\begin{subequations}
    \begin{equation} \label{eq:full_1}
    \text{If } \gamma_1 > k_1, \quad \min \left\{  \phi \left( \frac{g_{21}\gamma_2}{g_{11}\gamma_1 + 1}\right) , \phi \left( g_{22} \gamma_2 \right) \right\} = \phi \left( \frac{g_{21}\gamma_2}{g_{11}\gamma_1 + 1}\right) 
    \end{equation}
    \begin{equation} \label{eq:full_2}
        \text{If } \gamma_1 < k_1 , \quad \min \left\{  \phi \left( \frac{g_{21}\gamma_2}{g_{11}\gamma_1 + 1}\right) , \phi \left(  g_{22} \gamma_2 \right) \right\} = \phi \left( g_{22} \gamma_2 \right)
    \end{equation}
    \begin{equation} \label{eq:full_3}
        \text{If } \gamma_2 > k_2 , \quad \min \left\{  \phi \left( \frac{g_{12}\gamma_1}{g_{22}\gamma_2 + 1}\right) , \phi \left( g_{11} \gamma_1 \right) \right\} = \phi \left( \frac{g_{12}\gamma_1}{g_{22}\gamma_2 + 1}\right)
    \end{equation}
    \begin{equation} \label{eq:full_4}
        \text{If } \gamma_2 < k_2 , \quad \min \left\{  \phi \left( \frac{g_{12}\gamma_1}{g_{22}\gamma_2 + 1}\right) , \phi \left( g_{11} \gamma_1 \right) \right\} = \phi \left( g_{11} \gamma_1 \right)
    \end{equation}    
\end{subequations}

\textbf{Case 1:} Combining Eq.~(\ref{eq:full_2}) and Eq.~(\ref{eq:full_3})

\begin{equation} \label{eq:sub_full_1}
\begin{array}{ll}
\underset{\gamma_1,\gamma_2}{\text{maximize}}
& \phi \left( \frac{g_{12}\gamma_1}{g_{22}\gamma_2 + 1}\right) + \phi \left( g_{22} \gamma_2 \right) \\
\text{subject to}
& 0 \leq \gamma_1 \leq \frac{g_{21}-g_{22}}{g_{11}g_{22}}, \\
& \max \left\{ 0 , \frac{g_{12}-g_{11}}{g_{11}g_{22}} \right\} \leq \gamma_2 \leq \gamma_2^{max}.
\end{array}
\end{equation}
The objective function is identical to that of Eq.~(\ref{eq:partial_1}) in the Partial-SIC scenario; therefore, the solution follows the same logic:
$\gamma_1^* = \frac{g_{21}-g_{22}}{g_{11}g_{22}}$ and $\gamma_2^\star = \gamma_2^{max} $. 

\textbf{Case 2:} Combining Eq.~(\ref{eq:full_1}) and Eq.~(\ref{eq:full_4})

By symmetry, this combination yields the same structure as (\ref{eq:sub_full_1}) with indices exchanged. The resulting solution is $\gamma_1^* = \gamma_1^{max}$ and $\gamma_2^\star = \frac{g_{12}-g_{11}}{g_{11}g_{22}} $.

\textbf{Case 3:} Combining Eq.~(\ref{eq:full_2}) and Eq.~(\ref{eq:full_4})

\begin{equation}
\begin{array}{ll}
\underset{\gamma_1,\gamma_2}{\text{maximize}}
& \phi \left( g_{11} \gamma_1 \right) + \phi \left( g_{22} \gamma_2 \right) \\
\text{subject to}
& 0 \leq \gamma_1 \leq \frac{g_{21}-g_{22}}{g_{11}g_{22}}, \\
& 0 \leq \gamma_2 \leq \frac{g_{12}-g_{11}}{g_{11}g_{22}}.
\end{array}
\end{equation}
The objective function is monotonically increasing in both variables, making the solution trivial: $\gamma_1^\star = \frac{g_{21}-g_{22}}{g_{11}g_{22}}$ and $\gamma_2^\star = \frac{g_{12}-g_{11}}{g_{11}g_{22}}$.

\textbf{Case 4:} Combining Eq.~(\ref{eq:full_1}) and Eq.~(\ref{eq:full_3})

\begin{equation} \label{prob:fullsic4}
\begin{array}{ll}
\underset{\gamma_1,\gamma_2}{\text{maximize}}
& \phi \left( \frac{g_{12}\gamma_1}{g_{22}\gamma_2 + 1}\right) + \phi \left( \frac{g_{21}\gamma_2}{g_{11}\gamma_1 + 1}\right) \\
\text{subject to}
& \max \left\{0, \frac{g_{21}-g_{22}}{g_{11}g_{22}} \right\} \leq \gamma_1 \leq \gamma_1^{max} , \\
& \max \left\{0, \frac{g_{12}-g_{11}}{g_{11}g_{22}} \right\} \leq \gamma_2 \leq \gamma_2^{max}.
\end{array}
\end{equation}

The structure of (\ref{prob:fullsic4}) is identical to that of problem (\ref{prob:noSIC}). Consequently, the same methodology proves that the optimum lies on the boundary:
$\gamma_1^\star \in \{\max \left\{0, \frac{g_{21}-g_{22}}{g_{11}g_{22}} \right\},\gamma_1^{\max}\}$, $ \gamma_2^\star\in \{\max \left\{0, \frac{g_{12}-g_{11}}{g_{11}g_{22}} \right\},\gamma_2^{\max}\}$.
Combining the candidate solutions from these four cases yields the global optimal solution in Eq.~(\ref{eq:global_sol}).

\end{IEEEproof}

Ultimately, the global optimal solution is the maximum among the solutions to these three cases.
Note that Eq.~(\ref{eq:global_sol}) contains the solutions for the other two cases, so it includes the global solution.

Henceforth, we assume that each transmitter is situated closer to its intended receiver than to any other receiver in the network. This is a common assumption for cellular networks, since mobile nodes often connect to their nearest base station, making intended link gains naturally dominant over cross-link interference. Mathematically, this corresponds to the condition:
\begin{equation} \label{eq:condition}
    g_{ii} > g_{ij},
\end{equation}
with $i,j \in \{1, 2\}$ and $i \neq j$.

\begin{proposition}
If Eq.~(\ref{eq:condition}) holds, the maximum sum-rate without interference cancellation $r^\star_{NS}$, is always greater than or equal to that with full cancellation $r^\star_{FS}$:
\begin{equation}
    r_{NS}^\star \geq r_{FS}^\star.
\end{equation}
Furthermore, the optimum occurs when both transmitters operate at their maximum available power, i.e., 
\begin{equation}
        \gamma_i^\star = \gamma_i^{max},\qquad i=1,2.
\end{equation}
\end{proposition}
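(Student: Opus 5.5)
The plan is to reduce Full-SIC to Case~4 of the Full-SIC analysis above and then compare it with the No-SIC objective point by point. Under Eq.~(\ref{eq:condition}) we have $g_{21}<g_{22}$ and $g_{12}<g_{11}$. Hence both thresholds are negative:
\begin{equation*}
k_1=\frac{g_{21}-g_{22}}{g_{11}g_{22}}<0, \qquad k_2=\frac{g_{12}-g_{11}}{g_{11}g_{22}}<0 .
\end{equation*}
Every feasible power pair therefore satisfies $\gamma_1>k_1$ and $\gamma_2>k_2$. By Eqs.~(\ref{eq:full_1}) and (\ref{eq:full_3}), the minima in problem~(\ref{prob:fullsic1}) are always attained by the cross-link terms. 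So Cases~1--3 are empty and the Full-SIC problem coincides with problem~(\ref{prob:fullsic4}), with lower bounds equal to $0$:
\begin{equation*}
r_{FS}(\gamma_1,\gamma_2)=\phi\!\left(\frac{g_{12}\gamma_1}{g_{22}\gamma_2+1}\right)+\phi\!\left(\frac{g_{21}\gamma_2}{g_{11}\gamma_1+1}\right).
\end{equation*}

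The key step is a termwise domination. For any fixed $(\gamma_1,\gamma_2)$ in the box, the two inequalities $g_{12}<g_{11}$ and $g_{22}\gamma_2+1\ge g_{21}\gamma_2+1$ give
\begin{equation*}
\frac{g_{12}\gamma_1}{g_{22}\gamma_2+1}\le\frac{g_{11}\gamma_1}{g_{21}\gamma_2+1}.
\end{equation*}
Since $\phi$ is increasing, the first Full-SIC term is at most the first No-SIC term in problem~(\ref{prob:noSIC}). Symmetrically, $g_{21}<g_{22}$ and $g_{11}\gamma_1+1\ge g_{12}\gamma_1+1$ bound the second term. Thus $r_{FS}(\gamma_1,\gamma_2)\le r_{NS}(\gamma_1,\gamma_2)$ on the whole feasible box. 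Maximizing both sides over the same box gives $r^\star_{FS}\le r^\star_{NS}$. The intuition is that decoding the interferer to cancel it forces each rate below a capacity computed over a weaker link with stronger interference.

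For the power statement, I will use the earlier result that problem~(\ref{prob:fullsic4}) (and likewise problem~(\ref{prob:noSIC})) is optimized on the corner set $\{0,\gamma_1^{\max}\}\times\{0,\gamma_2^{\max}\}$, since the lower bounds are now $0$. I will then compare corners. The origin gives rate $0$, so it can be discarded. At the corner $(\gamma_1^{\max},0)$ the Full-SIC rate is $\phi(g_{12}\gamma_1^{\max})$. Raising $\gamma_2$ from $0$ to $\gamma_2^{\max}$ lowers the first term but adds the second. I would show the net effect is nonnegative by differentiating in $\gamma_2$, along the lines of the Partial-SIC computation for Eq.~(\ref{eq:partial_1}), which was shown there to be monotonically increasing in $\gamma_2$.

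I expect this last corner comparison to be the main obstacle. In general the derivative has the sign of $g_{21}(g_{12}\gamma_1+1)-g_{12}g_{22}\gamma_1(g_{11}\gamma_1+1)/(g_{22}\gamma_2+1)$-type expressions, and these need not be nonnegative for all gains. If the monotonicity fails in general, I would state the power claim as "among the candidate points of Eq.~(\ref{eq:global_sol}), the interior thresholds are excluded by the domination argument." I would then prove full power only under whatever extra condition the sign analysis requires, for instance the symmetric setting used later in the paper, where the corner comparison reduces to a one-variable inequality.
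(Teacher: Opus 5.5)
Your first part is correct and matches the paper's argument. Under Eq.~(\ref{eq:condition}) both thresholds are negative, Full-SIC collapses to problem~(\ref{prob:fullsic4}) with zero lower bounds, and a termwise comparison gives $r_{FS}(\gamma_1,\gamma_2)\le r_{NS}(\gamma_1,\gamma_2)$ on the whole box. You are in fact more careful than the paper, which only mentions the first term.

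The second part has a genuine gap: you are proving the wrong statement. The power claim is about the optimum of the global problem~(\ref{prob:sum_rate_global}). After your first part, the Full-SIC branch is irrelevant to that optimum, so where Full-SIC attains its own maximum does not matter.

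Worse, the claim you are attempting is false, as you suspected. Take $g_{11}=g_{22}=1$, $g_{12}=g_{21}=\tfrac12$ and $\gamma_1^{\max}=\gamma_2^{\max}=100$. The Full-SIC objective at full power is $2\phi(50/101)\approx 1.16$, while the corner $(\gamma_1^{\max},0)$ gives $\phi(50)\approx 5.67$. The same happens in the symmetric setting for large $\gamma$, because the full-power Full-SIC rate stays bounded while the single-user corner grows. So your fallback does not rescue it either.

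No-SIC on its own also fails to have its optimum at full power at high SNR; with the same gains it gives about $3.13$ versus $\phi(100)\approx 6.66$. Your analogy with Eq.~(\ref{eq:partial_1}) breaks down because there the second term is interference-free.

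The missing ingredient is the Partial-SIC branch, which you never use. Under Eq.~(\ref{eq:condition}) one has $k_t<0$. So when $R_2$ cancels, the binding constraint on $r_1$ is the cross-link one, and the objective becomes
\begin{equation*}
\phi\!\left(\frac{g_{12}\gamma_1}{g_{22}\gamma_2+1}\right)+\phi\!\left(g_{22}\gamma_2\right)=\log_2\!\left(1+g_{12}\gamma_1+g_{22}\gamma_2\right).
\end{equation*}
This is strictly increasing in both powers, so it is maximized at $(\gamma_1^{\max},\gamma_2^{\max})$. Its value there strictly exceeds the No-SIC value $\phi(g_{22}\gamma_2^{\max})$ at the corner $(0,\gamma_2^{\max})$. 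Symmetrically, Partial-SIC with $R_1$ cancelling is maximized at full power and beats the No-SIC corner $(\gamma_1^{\max},0)$.

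To finish, combine four facts: No-SIC is optimized on the corner set; its origin gives rate $0$; its single-user corners are beaten by the Partial-SIC values above; and Full-SIC is dominated by No-SIC. Hence every remaining candidate for the global maximum has $\gamma_i=\gamma_i^{\max}$. This is precisely the paper's route for the second statement.
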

\begin{IEEEproof}
To prove the first statement, note that under condition (\ref{eq:condition}), the $\min\{\cdot\}$ expressions in (\ref{prob:fullsic1}) always yield the first term, therefore, problem (\ref{prob:fullsic1}) reduces to problem (\ref{prob:fullsic4}).
Comparing the first term of the objective functions of problems (\ref{prob:noSIC}) and (\ref{prob:fullsic4}) reveals that the former is always greater than or equal to the latter, thus ensuring $r_{NS}^\star \geq r_{FS}^\star$.

For the second statement, given the previous result, we only need to compare the No-SIC and Partial-SIC scenarios.
Under condition (\ref{eq:condition}), Partial-SIC implies $k_t < 0$. Consequently, only Eq.~(\ref{eq:partial_1}) needs to be maximized, and its solution is $\gamma^\star_i = \gamma_i^{max}$ for $i=1,2$.
Evaluating the objective function of problem (\ref{prob:noSIC}) at $\gamma_1 = 0$ and $\gamma_2 = \gamma_2^{max}$ and comparing it with Eq.~(\ref{eq:partial_1}) at $\gamma_i = \gamma_i^{max}$, it is clear that the latter is greater. Since boundary cases where $\gamma_1 = 0$ or $\gamma_2 = 0$ are not optimal for the global problem, the optimum must occur at $\gamma^\star_i = \gamma_i^{max}$ for $i=1,2$.

\end{IEEEproof}

\begin{proposition} \label{pro:noisy}
    If Eq.~(\ref{eq:condition}) holds and the noise power is sufficiently large ($\gamma_i \ll 1$), the No-SIC strategy strictly dominates the Partial-SIC strategies:
    \begin{equation}
        r^\star_{NS} > \max \{ r^\star_{PI}, r^\star_{PII} \},
    \end{equation}
    where $r^\star_{PI}$ and $r^\star_{PII}$ are the sum-rates following a Partial-SIC strategy when $R_2$ or $R_1$ remove the interference respectively.
\end{proposition}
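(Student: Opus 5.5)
We compare the two strategies directly at their optimal operating points and then use a low-SNR expansion of $\phi$. By the previous proposition and the Partial-SIC analysis, under condition (\ref{eq:condition}) the threshold $k_t$ is negative. Hence for Partial-SIC with $R_2$ cancelling, only the branch in Eq.~(\ref{eq:partial_1}) is active, and its optimum is attained at $\gamma_1=\gamma_1^{\max}$, $\gamma_2=\gamma_2^{\max}$. This gives
\begin{equation*}
r^\star_{PI} = \phi\!\left(\frac{g_{12}\gamma_1^{\max}}{g_{22}\gamma_2^{\max}+1}\right) + \phi\!\left(g_{22}\gamma_2^{\max}\right),
\end{equation*}
and symmetrically $r^\star_{PII}$ with the roles of the indices exchanged. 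For No-SIC, we do not need the exact optimum. It suffices to lower bound $r^\star_{NS}$ by the objective of problem (\ref{prob:noSIC}) evaluated at $\gamma_i=\gamma_i^{\max}$, since that point is feasible.

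Next we expand around the low-SNR regime, $\gamma_i^{\max}\ll 1$, using $\phi(x)=\frac{1}{\ln 2}\left(x - x^2/2 + O(x^3)\right)$ and $1/(1+y)=1-y+O(y^2)$. Write $\gamma_i$ for $\gamma_i^{\max}$. The expansions are
\begin{align*}
\ln 2\cdot r_{NS}(\gamma_1,\gamma_2) &= g_{11}\gamma_1 + g_{22}\gamma_2 - g_{11}g_{21}\gamma_1\gamma_2 - g_{22}g_{12}\gamma_1\gamma_2 - \tfrac{1}{2}(g_{11}\gamma_1)^2 - \tfrac{1}{2}(g_{22}\gamma_2)^2 + O(\gamma^3),\\
\ln 2\cdot r^\star_{PI} &= g_{12}\gamma_1 + g_{22}\gamma_2 + O(\gamma^2).
\end{align*}
The difference at first order is $(g_{11}-g_{12})\gamma_1>0$ by (\ref{eq:condition}). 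Similarly, $r_{NS}-r^\star_{PII}=\frac{1}{\ln 2}(g_{22}-g_{21})\gamma_2+O(\gamma^2)>0$. Thus for sufficiently small $\gamma_i$ the linear term dominates, and both differences are strictly positive. Combined with $r^\star_{NS}\ge r_{NS}(\gamma_1^{\max},\gamma_2^{\max})$, this yields the claim.

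The main obstacle is making ``sufficiently large noise'' rigorous rather than purely asymptotic. The $O(\gamma^2)$ remainder must not overturn the first-order gap when one $\gamma_i$ is much smaller than the other. For example, in $r_{NS}-r^\star_{PI}$ the gap is proportional to $\gamma_1$, while the second-order terms contain $\gamma_1\gamma_2$ and $\gamma_2^2$.

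The first fix I would try is to cancel the common term $\phi(g_{22}\gamma_2)$ exactly rather than expand it. Using $\phi(a)-\phi(b)$ forms, the difference becomes
\begin{equation*}
\phi\!\left(\frac{g_{11}\gamma_1}{g_{21}\gamma_2+1}\right)-\phi\!\left(\frac{g_{12}\gamma_1}{g_{22}\gamma_2+1}\right)+\left[\phi\!\left(\frac{g_{22}\gamma_2}{g_{12}\gamma_1+1}\right)-\phi(g_{22}\gamma_2)\right].
\end{equation*}
The first difference is at least of order $\gamma_1\big(g_{11}(1-g_{21}\gamma_2)-g_{12}\big)$. The bracket is only $-O(\gamma_1\gamma_2)$. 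So every term is proportional to $\gamma_1$, and dividing by $\gamma_1$ leaves $(g_{11}-g_{12})-O(\gamma_1+\gamma_2)$, which is uniformly positive for small SNR.

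If this bookkeeping proves messy, I would fall back to stating the result in the scaling regime $\gamma_i^{\max}=\epsilon\,\bar\gamma_i$ with $\epsilon\to 0$. That regime is the sense in which the proposition's ``$\gamma_i\ll 1$'' is meant.
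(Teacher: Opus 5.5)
Your argument is correct, but it reaches the conclusion by a different route than the paper. The paper does not expand in $\gamma$. It sets the No-SIC objective equal to Eq.~(\ref{eq:partial_1}) and solves exactly, giving the explicit threshold $\gamma_2 = \frac{g_{11}-g_{12}}{g_{21}g_{12}}$ below which $r_{NS} > r_{PI}$, and symmetrically $\gamma_1 < \frac{g_{22}-g_{21}}{g_{12}g_{21}}$ for $r_{PII}$. It then notes that both conditions hold as $\gamma_i \to 0$.

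This works cleanly because of the telescoping identity $2^{r_{PI}} = 1+g_{12}\gamma_1+g_{22}\gamma_2$, which gives
\[
r_{NS}-r_{PI} = \log_2\!\left(1+\frac{\gamma_1\left(g_{11}-g_{12}-g_{12}g_{21}\gamma_2\right)}{(1+g_{21}\gamma_2)(1+g_{12}\gamma_1)}\right).
\]
So for $\gamma_1>0$ the sign is exactly that of $g_{11}-g_{12}-g_{12}g_{21}\gamma_2$, independent of $\gamma_1$. This settles in one line the uniformity worry you raised. The common factor $\gamma_1$ you extracted via mean-value and Lipschitz bounds is in fact exact, and the remaining condition involves only $\gamma_2$. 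It also gives ``sufficiently large noise'' a non-asymptotic, quantitative meaning, rather than a $(g_{11}-g_{12}) - O(\gamma_1+\gamma_2)$ bound with unspecified constants. Your scaling-regime fallback is therefore unnecessary.

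In the other direction, your write-up is more careful than the paper's about passing from pointwise comparisons to optimal values. You note explicitly that under (\ref{eq:condition}) one has $k_t<0$, so $r^\star_{PI}$ is attained at $(\gamma_1^{\max},\gamma_2^{\max})$. You also lower-bound $r^\star_{NS}$ by that feasible point. The paper's proof leaves this step implicit.
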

\begin{IEEEproof}
Equating the objective function of problem (\ref{prob:noSIC}) to Eq.~(\ref{eq:partial_1}) and solving for $\gamma_2$ yields $\gamma_2 = \frac{g_{11} - g_{12}}{g_{21}g_{12}}$. This result defines the configurations where No-SIC outperforms Partial-SIC:
\begin{equation} \label{condition:1}
    \text{If} \quad \gamma_2 < \frac{g_{11} - g_{12}}{g_{21}g_{12}}, \quad r_{NS} > r_{PI}
\end{equation}
Similarly, when $R_1$ performs SIC, the condition becomes
\begin{equation} \label{condition:2}
    \text{if} \quad \gamma_1 < \frac{g_{22} - g_{21}}{g_{12}g_{21}}, \quad r_{NS} > r_{PII}.
\end{equation}
Consequently, in the limit as $\gamma_i \rightarrow 0$ for $i=1,2$, inequalities (\ref{condition:1}) and  (\ref{condition:2}) are always satisfied. Therefore, $r_{NS} > \max \{r_{PI} ,r_{PII} \}$ holds in the high-noise regime.

\end{IEEEproof}

\begin{proposition} \label{pro:0noise}
    If the noise power is sufficiently small ($\gamma_i \rightarrow \infty$), the optimal sum-rate $r_{sum}^\star$ approaches the capacity of a single transmitter
    \begin{equation}
        r_{sum}^\star \simeq \max \{\phi( g_{11} \gamma_1^{max}) , \phi( g_{22} \gamma_2^{max})\}.
    \end{equation}
\end{proposition}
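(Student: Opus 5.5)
We interpret the statement asymptotically: when $\gamma_1^{max},\gamma_2^{max}\to\infty$ (noise vanishing), the gap between $r_{sum}^\star$ and $\max\{\phi(g_{11}\gamma_1^{max}),\phi(g_{22}\gamma_2^{max})\}$ stays bounded. Both sides grow like $\log_2$ of the SNR, so the ratio tends to one. The plan is to show this through a lower bound and an upper bound, using the candidate set in Eq.~(\ref{eq:global_sol}).

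\textbf{Lower bound.} Switching one transmitter off is feasible in every architecture. Taking $\gamma_2=0$ in problem~(\ref{prob:noSIC}) gives $\phi(g_{11}\gamma_1^{max})$, and symmetrically for user 2. Hence $r_{sum}^\star \ge \max\{\phi(g_{11}\gamma_1^{max}),\phi(g_{22}\gamma_2^{max})\}$ with no approximation.

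\textbf{Upper bound.} We evaluate each candidate of Eq.~(\ref{eq:global_sol}), which contains the No-SIC and Partial-SIC solutions, as $\gamma_i^{max}\to\infty$.

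\emph{Interior thresholds.} The values $k_1=\frac{g_{21}-g_{22}}{g_{11}g_{22}}$ and $k_2=\frac{g_{12}-g_{11}}{g_{11}g_{22}}$ do not depend on the noise. Any candidate using them therefore contributes a bounded term, or at most a term equal to a single-user rate at full power.

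\emph{Both users at full power.} This is the key case. In No-SIC, each SINR $\frac{g_{11}\gamma_1}{g_{21}\gamma_2+1}$ tends to the finite limit $\frac{g_{11}\gamma_1^{max}}{g_{21}\gamma_2^{max}}$ when the two maximum SNRs scale together. Using $\phi(a)+\phi(b)$ with $ab$ bounded, the sum stays $O(1)$. In Partial-SIC (Eq.~(\ref{eq:partial_1})), $r_1$ is capped by a bounded SINR term, so the sum is $\phi(g_{22}\gamma_2^{max})+O(1)$. The Full-SIC case (problem~(\ref{prob:fullsic4})) is treated in the same way: each rate is capped by a cross-link SINR whose product is bounded, since $\frac{g_{12}\gamma_1}{g_{22}\gamma_2+1}\cdot\frac{g_{21}\gamma_2}{g_{11}\gamma_1+1}\to\frac{g_{12}g_{21}}{g_{11}g_{22}}$.

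\emph{One user off.} The sum-rate is exactly a single-user capacity.

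Taking the maximum over the finitely many candidates, we get $r_{sum}^\star\le\max\{\phi(g_{11}\gamma_1^{max}),\phi(g_{22}\gamma_2^{max})\}+C$, where $C$ depends only on the gains and on the ratio $\gamma_1^{max}/\gamma_2^{max}$. This is the claimed $\simeq$.

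\textbf{Main obstacle.} The hardest step is making the bounded-term argument rigorous in the full-power cases. It requires fixing how $\gamma_1^{max}$ and $\gamma_2^{max}$ jointly grow, for instance a fixed ratio. Without that, one SINR can diverge, but then one user dominates and the single-user limit is recovered anyway. I would handle this by writing $\phi(x)+\phi(y)=\log_2\big((1+x)(1+y)\big)$ and bounding $(1+x)(1+y)\le 1+x+y+xy$. Here $xy$ is bounded by the product computed above, and $x+y$ is at most of order the larger single-user SNR.
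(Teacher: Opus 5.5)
Your proposal is correct and follows essentially the paper's route. With both users at full power and comparable SNRs, the No-SIC and Full-SIC sums saturate because every rate is capped by an interference-limited SINR, while the Partial-SIC sum is a single-user capacity plus the bounded cross-link term $\phi\bigl(\frac{g_{12}\gamma_1}{g_{22}\gamma_2+1}\bigr)$; this is exactly the paper's $\Omega$ argument. The paper fixes $\gamma_1=\gamma_2=\gamma$ and states only a vanishing relative gap, whereas your switch-off lower bound plus bounded additive gap gives a two-sided, slightly stronger form.

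One minor slip, not a gap: ``$ab$ bounded'' alone does not make $\phi(a)+\phi(b)$ bounded, and your product trick does not cover Partial-SIC as stated, since there $xy$ grows with $\gamma_1$. Neither matters: under your fixed-ratio scaling each SINR is individually bounded, and in general the identity $\bigl(1+\frac{g_{12}\gamma_1}{g_{22}\gamma_2+1}\bigr)(1+g_{22}\gamma_2)=1+g_{12}\gamma_1+g_{22}\gamma_2$ supplies the needed $\max+C$ bound.
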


\begin{IEEEproof}
Let $\gamma_1, \gamma_2 \rightarrow \infty$, and and for simplicity, represent both by the same variable $\gamma$. As $\gamma \rightarrow \infty$, Full-SIC reduces to problem (\ref{prob:fullsic4}) because all other sub-problems become unfeasible. Furthermore, the rate for Full-SIC remains bounded: 
\begin{equation}
    \lim_{\gamma \rightarrow \infty} \left[  \phi \left( \frac{g_{12}\gamma}{g_{22}\gamma + 1}\right) + \phi \left( \frac{g_{21}\gamma}{g_{11}\gamma + 1}\right) \right] < \infty,
\end{equation}
Similarly, for No-SIC:
\begin{equation}
     \lim_{\gamma \rightarrow \infty} \left[ \phi \left( \frac{g_{11}\gamma}{g_{21}\gamma + 1}\right) + \phi \left( \frac{g_{22}\gamma}{g_{12}\gamma + 1}\right) \right] < \infty.
\end{equation}
In contrast, Partial-SIC reduces to Eq.(\ref{eq:partial_1}), which provides the optimal solution because it is unbounded in the limit:
\begin{equation}
    \lim_{\gamma \rightarrow \infty} \left[ \phi \left( \frac{g_{12}\gamma_1}{g_{22}\gamma_2 + 1}\right)  + \phi \left( g_{22}\gamma_2 \right) \right] = \infty.
\end{equation}
A similar behaviour is observed for a single transmitter, which has a rate $r_{single} = \phi \left( g_{22}\gamma_2 \right)$, where
\begin{equation}
    \lim_{\gamma \rightarrow \infty}  r_{single} = \infty.
\end{equation}
To compare these expressions, we define $\Omega$ as $\Omega = \frac{  \phi \left( \frac{g_{12}}{g_{22}}\right)}{\phi \left( \frac{g_{12}\gamma}{g_{22}\gamma + 1}\right)  + \phi \left( g_{22}\gamma \right)}$. It follows that
\begin{equation}
    r_{single} \geq r^\star_{PI} (1 - \Omega).
\end{equation}
In the limit as$\gamma \rightarrow \infty$, $\Omega \rightarrow 0$, demonstrating that a single-transmitter strategy is quasi-optimal, i.e., $r_{single} \simeq r^\star_{PI}$.

\end{IEEEproof}

\begin{remark}[High-SNR and Low-SNR regimes] Propositions \ref{pro:noisy} and \ref{pro:0noise} characterize the dominant transmission strategies at extreme SNR scales. In the low-SNR regime ($\gamma_i \ll 1$), where noise is the primary limiting factor, treating interference as noise (No-SIC) is the optimal policy as the power of the interfering signal is negligible compared to the noise floor. Conversely, in the high-SNR regime ($\gamma_i \gg 1$), the system becomes interference-limited; in this state, silencing one user makes a quasi-optimal approach.
\end{remark}

The utility of Successive Interference Cancellation becomes clear in the intermediate SNR regime. In this transition zone, neither noise nor interference is sufficiently dominant to justify the simpler strategies of greedy or orthogonal transmissions. It is within these intermediate settings that SIC serves as a critical bridge, allowing the system to decode and subtract interfering signals to recover significant throughput that would otherwise be lost to interference or underutilized time-slots.

From a centralized perspective, a network controller with global Channel State Information (CSI) can determine the optimal sum-rate by evaluating the three sub-problems and coordinating the transmitters accordingly. However, in practical deployments, a centralized controller is often unavailable, or the signalling overhead required for coordination may be prohibitive. This motivates a decentralized algorithmic approach to jointly allocating rate and power for the interference channel.
Decentralizing the rate selection for arbitrary channel gains is a challenging problem due to the large number of possible configurations. 

To develop fundamental insights that could be useful to more general interference networks, we restrict our analysis to a symmetric normalized setting as follows. 
We assume that transmitters are equidistant from their intended receivers, such that $g_{11}=g_{22}=1$. To model the dominance of the intended link over interference, the cross-link gains are defined as $g_{21} = 1 - \epsilon$ and $g_{12} = 1 - \mu$, with $\epsilon,\mu \in (0,1)$ representing the margin by which the intended link outperforms the interfering ones.

Furthermore, we assume identical peak SNR for both transmitters. To improve readability, the corresponding sub- and super-indices will be omitted, denoting $\gamma = \gamma_1^{max} = \gamma_2^{max}$.

Under these assumptions, the global optimization problem reduces to selecting the maximum among three candidate sum-rates:
\begin{equation} \label{eq:max_reduced}
    \max \{ r^\star_{NS}, r^\star_{PI}, r^\star_{PII} \}
\end{equation}
with 
\begin{align}
    r^\star_{NS} &= \phi \left( \frac{\gamma}{(1-\epsilon) \gamma + 1}\right) + \phi \left( \frac{\gamma}{(1 - \mu) \gamma + 1}\right), \label{eq:NoSIC}\\
    r^\star_{PI} &= \phi \left( \frac{(1 - \mu) \gamma}{\gamma + 1}\right)  + \phi \left( \gamma \right),\label{eq:PartialI} \\
    r^\star_{PII} &= \phi \left( \gamma \right) + \phi \left( \frac{(1-\epsilon) \gamma}{\gamma + 1}\right). \label{eq:PartialII}
\end{align}

The choice between the two Partial-SIC configurations depends on the relative values of $\mu$ and $\epsilon$.
If $\mu > \epsilon$, it follows that Eq. (\ref{eq:PartialI}) $<$ Eq. (\ref{eq:PartialII}).
Wlog, we assume $\mu \geq \epsilon$; the case where $\mu<\epsilon$ is perfectly symmetric by exchanging the user indices.
The primary comparison, therefore, lies between $r^\star_{NS}$ and $r^\star_{PII}$. These two expressions are equal along a critical boundary defined by the following switching curve:
\begin{equation} \label{eq:curve}
    \mu = \frac{ \gamma (\epsilon-1) }{(\epsilon-1) \gamma - 1}.
\end{equation}
This curve, plotted in Fig.~\ref{fig:h_shape}, delineates the regions in the ($\epsilon,\mu$) plane where Partial-SIC outperforms the No-SIC strategy.

Figure \ref{fig:h_shape} illustrates the optimal strategy for a representative SNR of $\gamma = 4$. 
The diagonal $\mu = \epsilon$ serves as the boundary between the two Partial-SIC configurations. For each maximum SNR $\gamma$, this line intersects the switching curve in Eq.~\eqref{eq:curve} at $\epsilon=\mu=q(\gamma)$ with
\begin{equation} \label{eq:borders_meet}
    q (\gamma) = \frac{1 + 2 \gamma - \sqrt{1 + 4 \gamma}}{2 \gamma}.
\end{equation}
As $\gamma$ increases, this intersection point shifts, expanding the region where Partial-SIC is the optimal strategy.

\begin{figure}
\centering
\includegraphics[width=.45\textwidth]{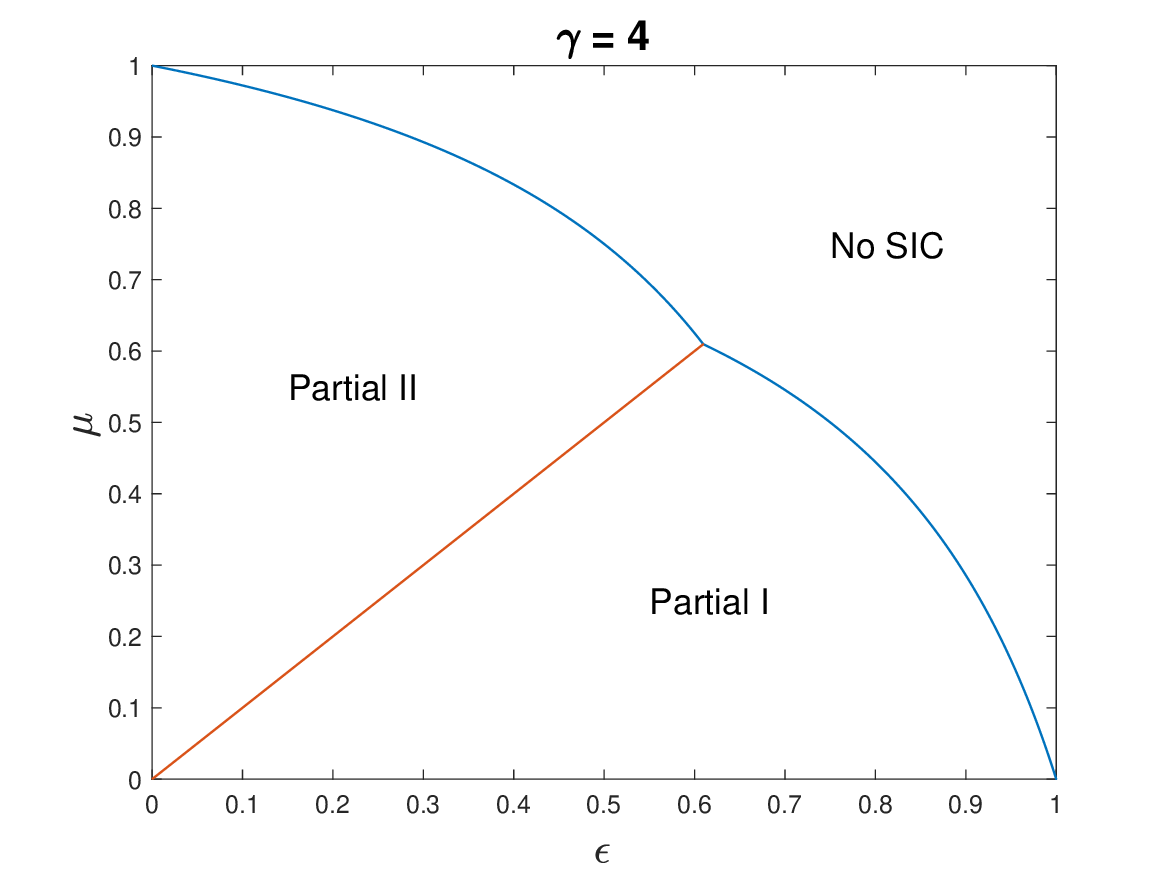}
\caption{Optimal solution for equidistant intended links.}
\label{fig:h_shape}
\end{figure}

\section{Decentralized algorithm}


The proposed algorithm employs a rate-oscillation mechanism for one transmitter, while the other greedily transmits at the highest feasible rate and performs SIC whenever possible. Since the transmitters do not communicate, this oscillatory behavior enables one transmitter to create favorable conditions for the other to perform SIC, thereby achieving a higher sum rate than strategies that completely avoid interference cancellation.
The algorithm operates in two distinct phases: an initialization phase where both transmitters determine their respective roles and a steady-state phase during which one transmitter alternates between two rate values, while the other varies its rate between zero and the capacity.

\begin{figure}
\centering
\includegraphics[width=.45\textwidth]{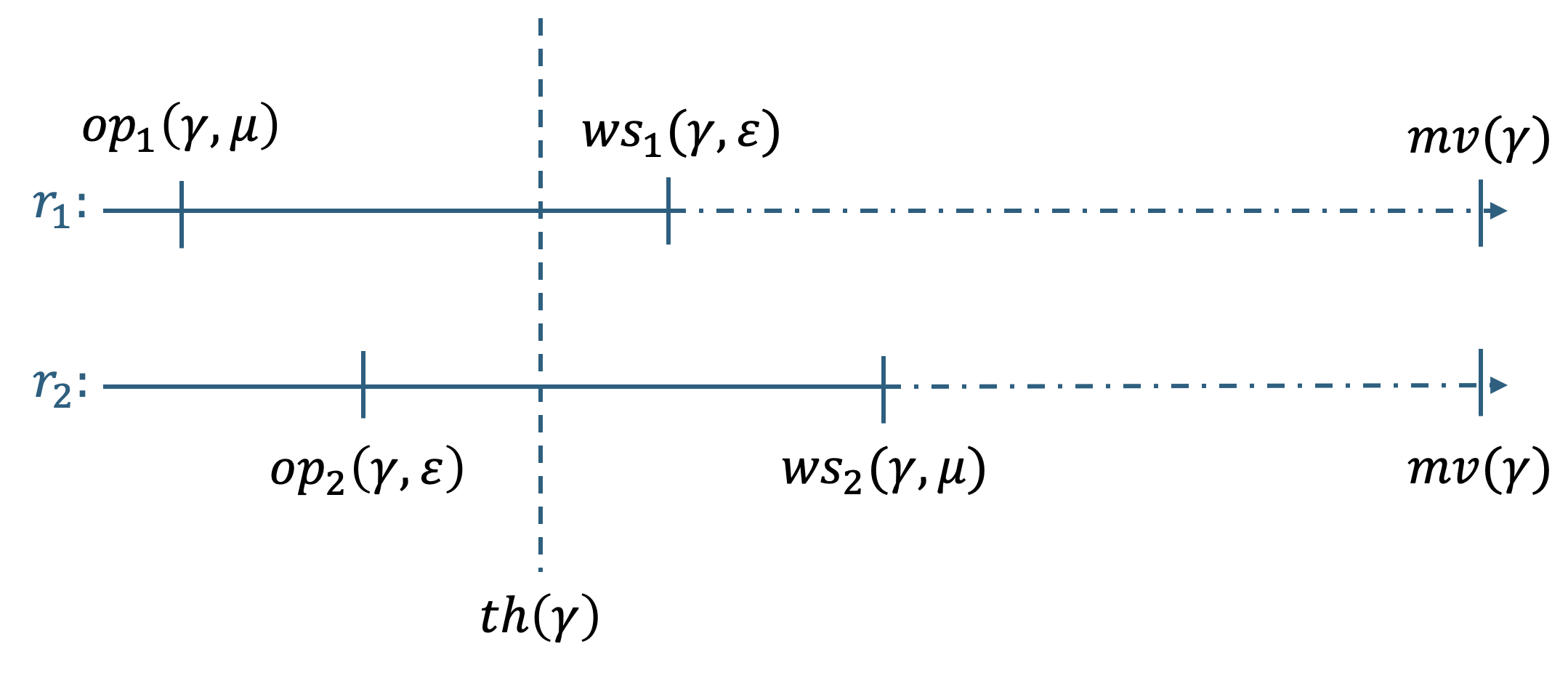}
\caption{Illustration of the key rate values for the case $\mu > \epsilon$.}
\label{fig:algorithm}
\end{figure}

Figure~\ref{fig:algorithm} illustrates the key operating points of the algorithm, assuming that $\mu \geq \epsilon$, wlog. 
Transmission rates increase from left to right, ranging from 0 to the maximum value $mv(\gamma) = \phi(\gamma)$.
The rates at which signals can be decoded by their intended receiver without SIC are denoted by $ws_1(\gamma,\epsilon) = \phi \left( \frac{\gamma}{(1-\epsilon)\gamma + 1}\right)$ and $ws_2(\gamma,\mu) = \phi \left( \frac{\gamma}{(1-\mu)\gamma + 1}\right)$. Signals with rates below those values are always feasible, while those with rates on the horizontal dashed line are only feasible when SIC is successfully performed.
The rates at which interferences can be decoded by the other receiver are denoted 
$op_1(\gamma,\mu) = \phi \left( \frac{(1-\mu)\gamma}{\gamma + 1}\right)$ and $op_2(\gamma,\epsilon) = \phi \left( \frac{(1-\epsilon)\gamma}{\gamma + 1}\right)$. Interferences with rates below those values can always be cancelled, while those with rates beyond the vertical dashed line $th(\gamma) = \phi \left(  \frac{\gamma}{\gamma + 1} \right) $ can never be cancelled, regardless of $\epsilon$ and $\mu$. 

The algorithm proceeds as follows. Both transmitters simultaneously start transmitting at the maximum rate $r_i = mv(\gamma)$ and then gradually decrease their rates at the same speed until SIC becomes possible or the rate reaches 0. Initially, neither receiver can decode either signal but $T_2$ eventually reaches the rate $r_2 = ws_2(\gamma,\mu)$ and $R_2$ becomes able to decode its signal. Shortly thereafter, $R_1$ also becomes capable of decoding its intended signal. When $r_2$ reaches $op_2(\gamma,\epsilon)$, $R_1$ can perform SIC and $T_1$ switches to the maximum rate $r_1 = mv(\gamma)$. Eventually, $T_2$ reaches $r_2 = 0$ and, since $R_2$ never managed to cancel interference, $T_2$ infers that $T_1$ is transmitting at the highest possible rate. This marks the end of phase one, roles are assigned. 

During the steady-state phase, the centralized solution for $T_2$ (whose receiver never achieved SIC) is $r_2=op_2(\gamma,\epsilon)$. However, $T_2$ does not know $\epsilon$, so it varies $r_2$ periodically within the interval $r_2 \in \left[ 0 , ws_2(\gamma,\mu) \right]$, with period $T$. Specifically, at time $t$ we have
\begin{equation}
r_2 = \left\{\begin{array}{lr}
v\cdot (t\ \mathrm{mod}\ T) & v\cdot (t\ \mathrm{mod}\ T)\leq th(\gamma)\\
ws_2(\gamma) & v\cdot (t\ \mathrm{mod}\ T)>th(\gamma)
\end{array}\right.
\end{equation}
with
\begin{equation} \label{eq:speed}
    v = \frac{1}{T} \phi \left( \frac{\gamma}{(1 - \mu)\gamma + 1}\right).
\end{equation}

In contrast, $T_1$ spends the steady-state phase greedily transmitting at the maximum rate tolerated by $R_1$, using SIC when possible. When $r_2\geq op_2(\gamma,\epsilon)$, $T_1$ reduces its rate to $r_1 = ws_1(\gamma,\epsilon)$ to ensure decodability and once $r_2$ falls below this threshold, $T_1$ resumes transmission at the maximum rate $r_1=mv(\gamma)$.

\begin{figure}
    \centering
    \includegraphics[width=.5\linewidth]{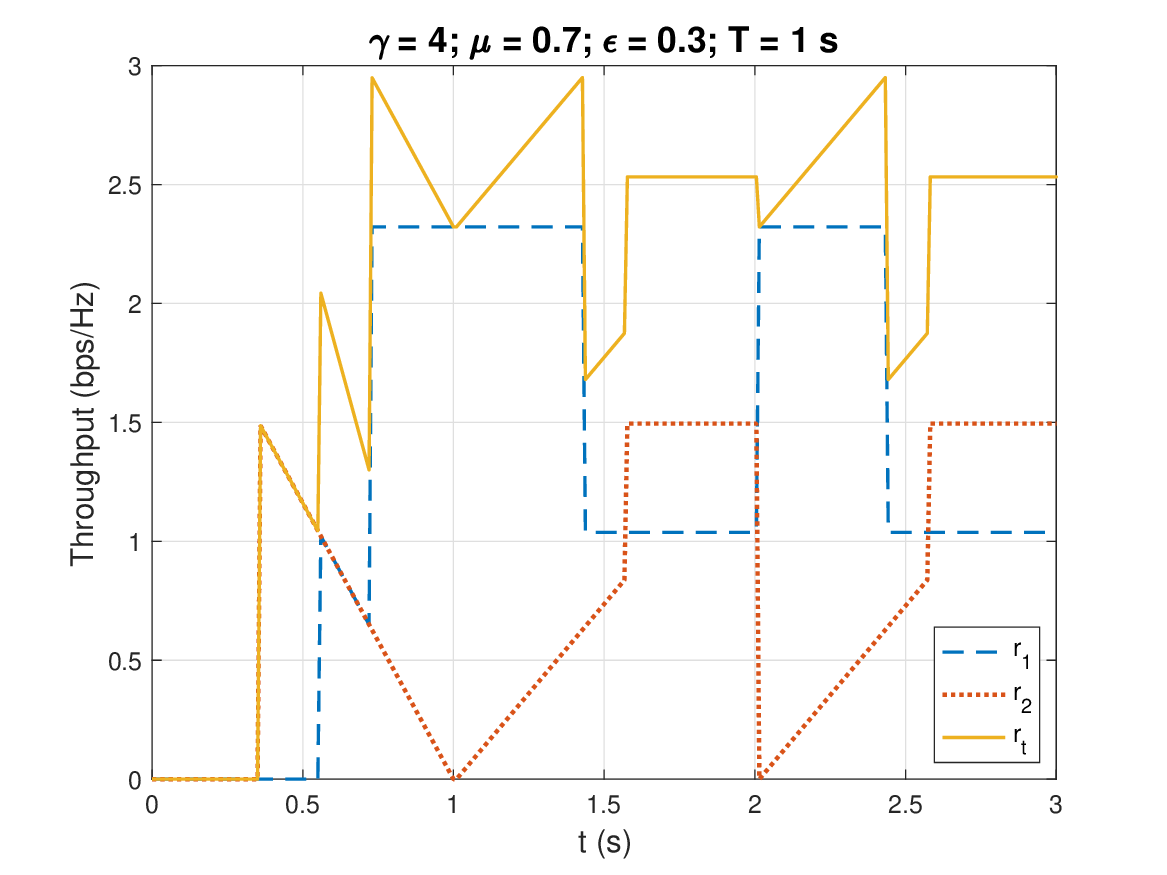} 
    \caption{Algorithm evolution for $\gamma = 4$, period $T=1$, $\mu = 0.7$ and $\epsilon = 0.3$.}
    \label{fig:algoritm_evol}
\end{figure}

\begin{example} \label{ex:algorithm}
To further illustrate the algorithm’s behaviour, Fig.~\ref{fig:algoritm_evol} depicts the temporal evolution of the transmission rates for a particular example with $\epsilon=0.3$, $\mu=0.7$, $\gamma=4$ and period $T=1$~s.
During the initialization phase (from $t=0$ to $t=1~s$), both transmitters initially operate at rates above capacity, preventing successful decoding. As time progresses, their rates decrease steadily. At $t=0.36$~s, $T_2$ reaches the rate $r_2 = ws_2(4,0.7) = 1.49$~bps/Hz, enabling $R_2$ to decode its signal. Receiver $R_1$ achieves successful decoding at $t=0.55$~s. Shortly afterward, when $r_2$ reaches $op_2(4,0.3) = 0.64$~bps/Hz, $R_1$ can cancel interference and $T_1$ switches to the maximum rate.

The steady-state phase begins at $t=1$~s. $R_1$ removes the interference until $t=1.44$~s when the $r_2 > 0.64$~bps/Hz and SIC cannot be performed. $T_2$ continues increasing the rate without knowing the changes of $r_1$ until it reaches $r_2 = th(4) = 0.85$~bps/Hz. 
The rate that allows $R_1$ to remove the interference is below $0.85$, transmitting with a rate different from the capacity for the time it is above it is useless, therefore, $T_2$ changes the rate to the greatest possible $r_2 = 1.49$~bps/Hz.
\end{example}

The proposed algorithm assumes that transmitters possess knowledge of both intended channels ($g_{ii} = 1$) and, despite its decentralized nature, it requires a certain degree of coordination during the initial phase. Specifically, both transmitters must decrease their rates simultaneously and at the same pace. While the rate of decrease (i.e., the speed) can be agreed upon in advance, some form of signaling between users is still necessary to synchronize their timing.

While the algorithm is sensitive to the specific values of $\epsilon$ and $\mu$, it is designed to ensure robust performance on average. In the absence of global Channel State Information, selecting a fixed transmission strategy is likely to be suboptimal. By hybridizing the No-SIC and Partial-SIC schemes, the algorithm mitigates the inherent limitations of each individual approach, leveraging their respective advantages to maintain a high average sum-rate across diverse channel configurations.

To compare the performance of the algorithm proposed with other procedures we need to compute the expected sum rate $\mathbb{E} [r_{os} (t) ]$,
where $r_{os}(t)$ is the total rate $r_{sum}$ at a particular time $t$ for transmitters following the proposed algorithm.
We ignore the initial phase of the algorithm for simplicity since, for sufficiently long time periods, it can be neglected.


\begin{theorem}
The expected total rate of the algorithm is given by
\begin{equation} \label{eq:exp_tot_rate}
    \mathbb{E} [r_{os} (t) ] = \mathbb{E} [ r_1 (t) ] + \mathbb{E} [ r_2 (t) ] 
\end{equation}
where
\begin{align} 
    \mathbb{E} [ r_1 (t) ] & = \frac{\phi \left( \frac{(1-\epsilon)\gamma}{\gamma + 1}\right) }{\phi \left( \frac{\gamma}{(1 - \mu)\gamma + 1}\right)}   \left[ \phi( \gamma) -  \phi \left( \frac{\gamma}{(1 - \epsilon)\gamma + 1}\right) \right] + \phi \left( \frac{\gamma}{(1 - \epsilon)\gamma + 1}\right)\label{eq:expected_1},\\
    \mathbb{E} [r_2(t)] & = \frac{1}{2} \frac{\phi^2 \left( \frac{\gamma}{\gamma + 1}\right)}{\phi \left( \frac{\gamma}{(1 - \mu)\gamma + 1}\right)} + \phi \left( \frac{\gamma}{(1 - \mu)\gamma + 1}\right) - \phi \left( \frac{\gamma}{\gamma + 1}\right).\label{eq:expected_2}
\end{align}
\end{theorem}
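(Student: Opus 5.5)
The approach is to read $\mathbb{E}[\cdot]$ as a time average over one period of the steady-state phase. Equivalently, $t\ \mathrm{mod}\ T$ is uniformly distributed on $[0,T)$, since the initial phase is neglected. Eq.~(\ref{eq:exp_tot_rate}) is then just linearity of expectation, because $r_{os}(t)=r_1(t)+r_2(t)$ pointwise. The work therefore reduces to computing $\mathbb{E}[r_1(t)]$ and $\mathbb{E}[r_2(t)]$ separately from the piecewise description of the steady-state rules. Throughout I write $s=t\ \mathrm{mod}\ T$ and use the shorthands $mv,ws_1,ws_2,op_2,th$ introduced before the theorem.

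The first step is to check the ordering of the thresholds that the piecewise formulas rely on, namely
\begin{equation*}
0< op_2(\gamma,\epsilon) < th(\gamma) \le ws_2(\gamma,\mu).
\end{equation*}
The first inequality holds because $(1-\epsilon)\gamma/(\gamma+1) < \gamma/(\gamma+1)$ for $\epsilon\in(0,1)$. The second holds because $(1-\mu)\gamma+1 \le \gamma+1$, and $\phi$ is increasing. I also check feasibility, so that the transmitted rates are the rates actually delivered:
\begin{itemize}
\item $r_2\le ws_2$ always, so $R_2$ decodes $T_2$ treating interference as noise.
\item When $r_2\le op_2$, $R_1$ cancels $T_2$'s signal, so $r_1=mv=\phi(\gamma)$ is decodable.
\item Otherwise $r_1=ws_1$ is decodable without SIC.
\end{itemize}
Boundary instants where $r_2$ equals a threshold have measure zero and do not affect the expectations.

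For $\mathbb{E}[r_2]$, I split the period at the instant where the ramp reaches $th$. On $\{vs\le th\}$, i.e.\ $s\in[0,th/v]$, the rate $r_2=vs$ is linear, so its integral is $\frac{th^2}{2v}$. On the remaining interval of length $T-th/v$, the rate is $r_2=ws_2$. Dividing by $T$ and substituting $vT=ws_2$ from Eq.~(\ref{eq:speed}) gives
\begin{equation*}
\mathbb{E}[r_2]=\frac{th^2}{2\,ws_2}+ws_2\Big(1-\frac{th}{ws_2}\Big),
\end{equation*}
which is Eq.~(\ref{eq:expected_2}). The condition $th\le ws_2$ guarantees that the ramp segment fits inside the period.

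For $\mathbb{E}[r_1]$, note that $T_1$ transmits at $mv$ exactly when $r_2<op_2$. Since $op_2<th$, this happens only during the ramp, i.e.\ for $s<op_2/v$. The fraction of time spent there is $op_2/(vT)=op_2/ws_2$, and for the rest of the period $r_1=ws_1$. Hence
\begin{equation*}
\mathbb{E}[r_1]=\frac{op_2}{ws_2}\,mv+\Big(1-\frac{op_2}{ws_2}\Big)ws_1=\frac{op_2}{ws_2}(mv-ws_1)+ws_1,
\end{equation*}
which is Eq.~(\ref{eq:expected_1}).

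The main obstacle is not the integration but justifying the modeling. I need to confirm that the jump of $r_2$ to $ws_2$ at $th$ never creates a second interval where $R_1$ can cancel, which follows from $ws_2\ge th>op_2$. I also need that $T_1$ reacts instantaneously, so that $r_1$ is a deterministic function of $r_2(t)$. I would state this instantaneous-reaction assumption explicitly, together with the uniform-time interpretation of $\mathbb{E}$.
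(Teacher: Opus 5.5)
Your proposal is correct and follows the paper's proof essentially step for step. Both interpret $\mathbb{E}[\cdot]$ as the time average over one steady-state period, split it at $t'=T\,op_2/ws_2$ (where $R_1$ loses SIC) and $t''=T\,th/ws_2$ (where $T_2$ jumps to $ws_2$), and integrate the piecewise $r_1$ and the ramp-then-constant $r_2$ using $vT=ws_2$. Your explicit checks of the ordering $op_2<th\le ws_2$ and of decodability make precise what the paper leaves implicit, and your ramp integrand $vs$ is the right one where the paper's displayed integrand has a typo (its final result is still correct).
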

\begin{IEEEproof}
The expected value of the algorithm is
\begin{equation} \label{eq:r_os_proof}
    \mathbb{E} [r_{os} (t) ] = \mathbb{E} [ r_1 (t) ] + \mathbb{E} [ r_2 (t) ] = \frac{1}{T} \int_0^T r_1(t) dt + \frac{1}{T} \int_0^T r_2 (t) dt
\end{equation}
where $r_1(t)$ and $r_2(t)$ are the are the instantaneous rates of each transmitter:
\begin{equation} \label{eq:r_1(t)}
r_1 (t)= \left\{\begin{array}{lr}
\phi ( \gamma) & t \leq t'\\
\phi \left( \frac{\gamma}{(1 - \epsilon)\gamma + 1}\right) & t > t'
\end{array}\right.
\end{equation}

\begin{equation} \label{eq:r_2(t)}
r_2(t) = \left\{\begin{array}{lr}
v \cdot t & t \leq t''\\
 \phi \left( \frac{\gamma}{(1 - \mu)\gamma + 1}\right) & t > t''
\end{array}\right.
\end{equation}
In these expressions, $t'$ denotes the time when $r_2(t') = \phi \left( \frac{(1-\epsilon)\gamma}{\gamma + 1}\right)$, marking the point beyond which interference cancellation can no longer be performed at $R_1$. Similarly, $t''$ is the time when $r_2 (t'')= \phi \left(  \frac{\gamma}{\gamma + 1} \right) $. 
At $t > t''$ the rate required for $R_1$ to remove interference lies below the current transmission rate. Since transmitting at a rate below capacity is suboptimal once this threshold is exceeded, transmitter $T_2$ switches its rate to the maximum feasible value.
Using the speed defined in Eq.~(\ref{eq:speed}) and setting $r_2(t') = v \cdot t'$, the time $t'$ is claculated as 
\begin{equation} \label{eq:tprime}
    t' = \frac{r_2(t')}{v} = \frac{T \phi \left( \frac{(1-\epsilon)\gamma}{\gamma + 1}\right)}{\phi \left( \frac{\gamma}{(1 - \mu)\gamma + 1}\right)}.
\end{equation}
Similarly, $t''$ is given by,
\begin{equation} \label{eq:t2prime}
    t'' = \frac{r_2(t'')}{v} = \frac{T \phi \left(  \frac{\gamma}{\gamma + 1} \right)}{\phi \left( \frac{\gamma}{(1 - \mu)\gamma + 1}\right)}.
\end{equation}

The expected values of the individual rates in Eq.~(\ref{eq:r_1(t)}) and Eq.~(\ref{eq:r_2(t)}) are
\begin{equation} \label{eq:gen_expected_1}
\begin{split}
    \mathbb{E} [ r_1 (t) ] & = \frac{1}{T}  \left[  \int_0^{t'} \phi ( \gamma) dt +   \int_{t'}^T \phi \left( \frac{\gamma}{(1 - \epsilon)\gamma + 1}\right) dt  \right]\\
    & = \frac{t'}{T} \phi ( \gamma) + \frac{(T-t')}{T} \phi \left( \frac{\gamma}{(1 - \epsilon)\gamma + 1}\right),
\end{split}
\end{equation}

\begin{equation} \label{eq:gen_expected_2}
\begin{split}
    \mathbb{E} [ r_2 (t) ] & = \frac{1}{T}  \left[  \int_0^{t''} \frac{1}{T} \phi \left( \frac{\gamma}{(1 - \mu)\gamma + 1}\right) dt +   \int_{t''}^T \phi \left( \frac{\gamma}{(1 - \mu)\gamma + 1}\right) dt  \right]\\
    & = \frac{t''^2}{2T^2} \phi \left( \frac{\gamma}{(1 - \mu)\gamma + 1}\right) + \frac{(T-t'')}{T} \phi \left( \frac{\gamma}{(1 - \mu)\gamma + 1}\right).
\end{split}
\end{equation}
Substituting Eq.~(\ref{eq:tprime}) and Eq.~(\ref{eq:t2prime}) in Eq.~(\ref{eq:gen_expected_1}) and Eq.~(\ref{eq:gen_expected_2}) respectively, yields the expected values in Eq.~(\ref{eq:expected_1}) and Eq.~(\ref{eq:expected_2}).

\end{IEEEproof}



\section{Empirical Results}

We define the efficiency $\rho$ of an algorithm as the ratio of its expected total rate and the maximum theoretical rate from Eq.~(\ref{eq:max_reduced}). Specifically, for our algorithm
\begin{equation}
    \rho_{os} = \frac{\mathbb{E} [r_{os} (t) ] }{\max \{ r^\star_{NS}, r^\star_{PI}, r^\star_{PII} \}},
\end{equation}
where $\mathbb{E} [r_{os} (t) ]$ is given in Eq.~(\ref{eq:exp_tot_rate}). Figure \ref{fig:effi_4} illustrates this efficiency as a function of $\epsilon$ and $\mu$ when the maximum SNR is $\gamma=4$. A prominent ridge of peak efficiency is visible, which marks the boundary between the No-SIC and the Partial-SIC regions. Since both strategies offer similar performance in this area, the sub-optimal cyclic oscillations of our algorithm do not have a significant impact on the total rate. 
Conversely, as the channel parameters move away from this boundary, the optimality gap of our algorithm widens. 

\begin{figure}[h]
    \centering
    \includegraphics[width=.5\linewidth]{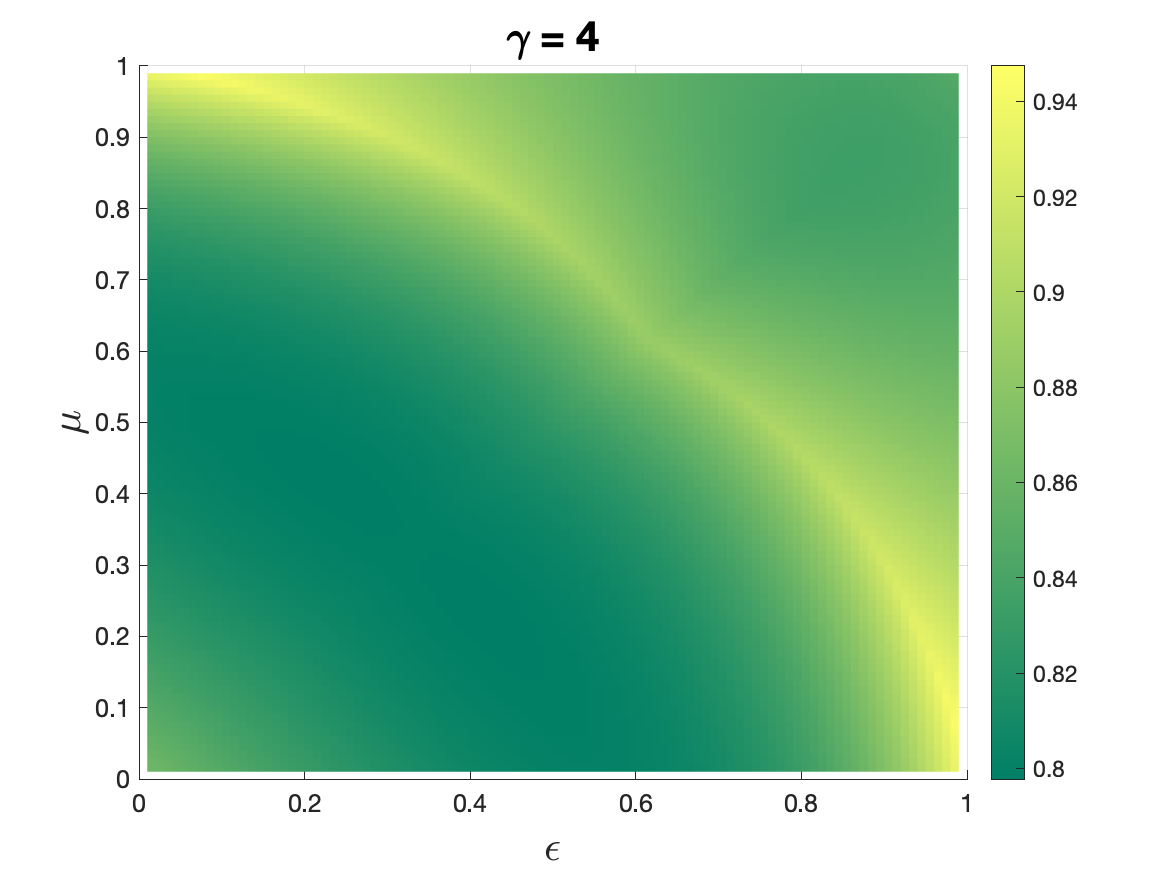} 
    \caption{Efficiency of the proposed algorithm for $\gamma = 4$.}
    \label{fig:effi_4}
\end{figure}

To further validate the proposed algorithm, we compare its performance against two benchmarks:
\begin{itemize}
\item Greedy Strategy: Both transmitters operate at their peak power and the maximum rate supported by their respective receivers, disregarding the interference caused to the other link. This is equivalent to the No-SIC scenario, hence the expected total rate is given by
\begin{equation}
    \mathbb{E} [r_{gr}(t)] = r^\star_{NS}.
\end{equation}

\item Orthogonal Access: 
Each transmitter is active during half of the total time interval, utilizing maximum power and rate without mutual interference.

Since both transmitters share the same peak SNR $\gamma$, the expected total rate is 
\begin{equation}
    \mathbb{E} [r_{or} (t) ] =  \phi ( \gamma).
\end{equation}

\end{itemize}
Figure~\ref{fig:compare} illustrates the efficiency of the three approaches for $\mu=0.2$ and varying $\epsilon$. The dotted vertical line denotes the transition between the Partial-SIC and No-SIC optimal regions.
As expected, the Greedy Strategy is optimal within the No-SIC region. However, its efficiency degrades significantly in Partial-SIC regions, where interference cancellation is critical.

On the other hand, Orthogonal Access is equivalent to having only one active transmitter. This approach is fundamentally suboptimal compared to SIC-enabled strategies. 
Performing SIC allows one user to transmit at the maximum rate while the other maintains a strictly positive (albeit reduced) rate, ensuring that the sum-rate always exceeds the orthogonal limit.
Nevertheless, orthogonal access 
is not universally inferior to the proposed algorithm. For instance, Prop.~\ref{pro:noisy} shows that when $\gamma \gg 1$, orthogonal access is nearly optimal. In such scenarios, it may surpass the performance of the proposed algorithm; the hybrid nature of our approach forces the system to spend time in a No-SIC state, which can be less efficient than pure time-sharing.

The proposed algorithm provides structural robustness by functioning as a dynamic combination of No-SIC and Partial-SIC strategies, ensuring consistent performance across diverse channel configurations.
Unlike the Greedy approach, which fails to exploit the throughput gains of interference cancellation, or Orthogonal Access, which is limited by time-slot partitioning, our decentralized oscillating mechanism dynamically adapts to the channel conditions.
In a decentralized environment where transmitters lack global Channel State Information, a static strategy risks severe efficiency loss.

\begin{figure}[h]
    \centering
    \includegraphics[width=.5\linewidth]{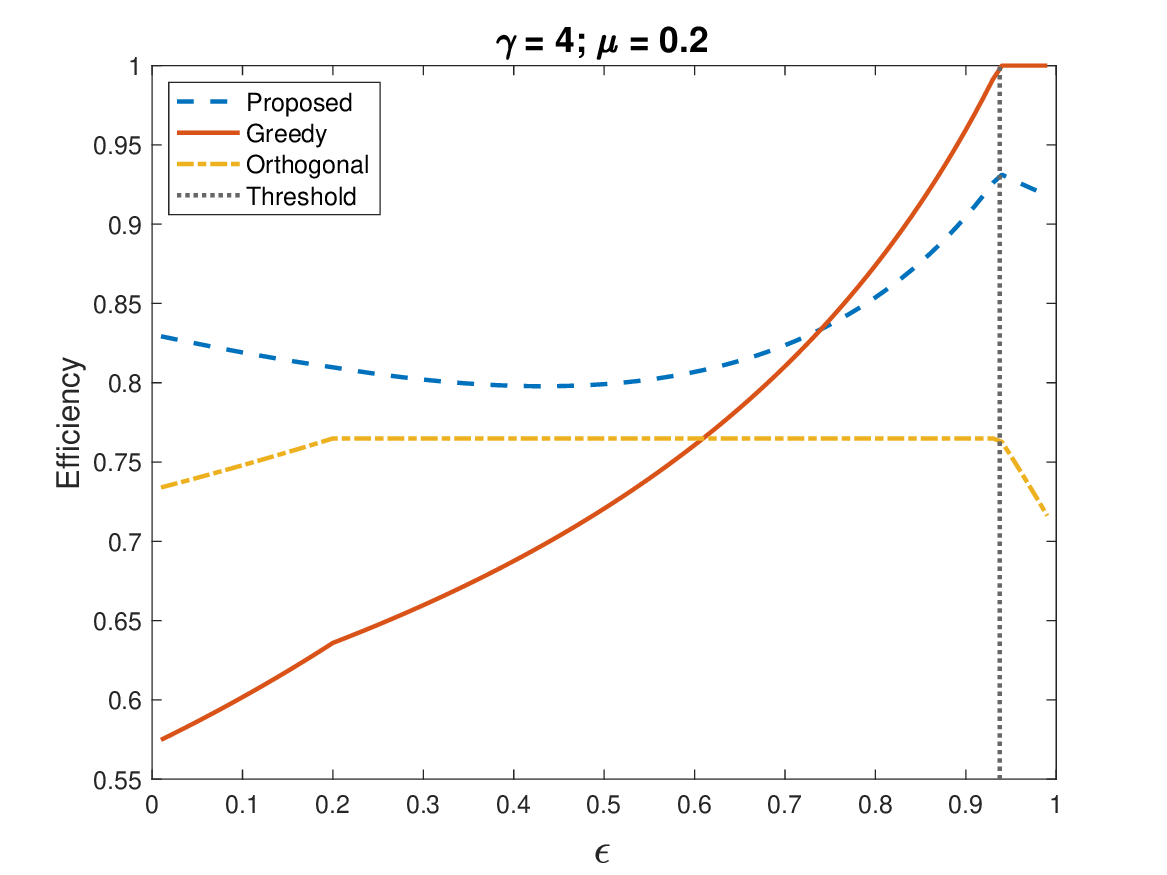} 
    \caption{Efficiency of the 3 algorithms with $\gamma = 4$.}
    \label{fig:compare}
\end{figure}

\section{Conclusions and future work}

In this work, we investigated the joint power and rate allocation problem for a two-user Gaussian interference channel, explicitly incorporating the capabilities of Successive Interference Cancellation at the receivers. We characterized the global optimal solution for arbitrary channel configurations and, acknowledging the practical constraints of decentralized networks, proposed a decentralized algorithm for the symmetric case where transmitters are equidistant from their respective receivers.

Our performance analysis demonstrates that the proposed procedure outperforms two conventional benchmarks: the Greedy Strategy, which suffers from its inability to exploit SIC gains, and Orthogonal Access, which is limited by time-slot partitioning. By dynamically adapting to the interference environment, our approach achieves high efficiency without requiring global Channel State Information.

Future research will focus on two primary objectives. First, we aim to generalize the decentralized framework to n-user networks, addressing the increased complexity of SIC ordering. Second, we intend to optimize the internal parameters of the oscillating mechanism, specifically investigating non-linear transition curves to further enhance system throughput.

\bibliographystyle{IEEEtran}
\bibliography{bibliography}

@article{naderializadeh2022state,
  title={State-augmented learnable algorithms for resource management in wireless networks},
  author={NaderiAlizadeh, Navid and Eisen, Mark and Ribeiro, Alejandro},
  journal={IEEE Transactions on Signal Processing},
  volume={70},
  pages={5898--5912},
  year={2022},
  publisher={IEEE}
}

@article{garrido2023resource,
  title={A Resource Allocation Algorithm for Collaborative Networks Using Inferred Information},
  author={Garrido, David and Zhang, Mai and Peleato, Borja},
  journal={IEEE Access},
  volume={11},
  pages={34685--34697},
  year={2023},
  publisher={IEEE}
}

@article{li2018maximum,
  title={Maximum sum rate of slotted aloha with successive interference cancellation},
  author={Li, Yitong and Dai, Lin},
  journal={IEEE Transactions on Communications},
  volume={66},
  number={11},
  pages={5385--5400},
  year={2018},
  publisher={IEEE}
}

@inproceedings{chandra2024analyzing,
  title={Analyzing the Performance of Various NOMA Systems Based on User Count and Minimum Rate Demand},
  author={Chandra, K Ramesh and Basamsetti, Anusha and Swathi, Kankipati and Kumar, NV Phani Sai and Durga, Vendra Bhavani and Raju, B Elisha},
  booktitle={2023 4th International Conference on Intelligent Technologies (CONIT)},
  pages={1--5},
  year={2024},
  organization={IEEE}
}

@article{chung2021correlated,
  title={Correlated superposition coding: Lossless two-user NOMA implementation without SIC under user-fairness},
  author={Chung, Kyuhyuk},
  journal={IEEE Wireless Communications Letters},
  volume={10},
  number={9},
  pages={1999--2003},
  year={2021},
  publisher={IEEE}
}

@inproceedings{trankatwar2024power,
  title={Power Allocation for Sum Rate Maximization Under SIC Constraint in NOMA Networks},
  author={Trankatwar, Sachin and Wali, Prashant K},
  booktitle={2024 16th International Conference on COMmunication Systems \& NETworkS (COMSNETS)},
  pages={646--650},
  year={2024},
  organization={IEEE}
}

@article{tabrizi2015spatial,
  title={Spatial reuse in dense wireless areas: A cross-layer optimization approach via ADMM},
  author={Tabrizi, Haleh and Peleato, Borja and Farhadi, Golnaz and Cioffi, John M and Aldabbagh, Ghadah},
  journal={IEEE Transactions on Wireless Communications},
  volume={14},
  number={12},
  pages={7083--7095},
  year={2015},
  publisher={IEEE}
}

@article{jafarkhani2024modulation,
  title={Modulation and Coding for NOMA and RSMA},
  author={Jafarkhani, Hamid and Maleki, Hossein and Vaezi, Mojtaba},
  journal={Proceedings of the IEEE},
  year={2024},
  publisher={IEEE}
}

@inproceedings{saito2013non,
  title={Non-orthogonal multiple access (NOMA) for cellular future radio access},
  author={Saito, Yuya and Kishiyama, Yoshihisa and Benjebbour, Anass and Nakamura, Takehiro and Li, Anxin and Higuchi, Kenichi},
  booktitle={2013 IEEE 77th vehicular technology conference (VTC Spring)},
  pages={1--5},
  year={2013},
  organization={IEEE}
}

@article{zeng2017sum,
  title={On the sum rate of MIMO-NOMA and MIMO-OMA systems},
  author={Zeng, Ming and Yadav, Animesh and Dobre, Octavia A and Tsiropoulos, Georgios I and Poor, H Vincent},
  journal={IEEE Wireless communications letters},
  volume={6},
  number={4},
  pages={534--537},
  year={2017},
  publisher={IEEE}
}

@article{ding2015application,
  title={The application of MIMO to non-orthogonal multiple access},
  author={Ding, Zhiguo and Adachi, Fumiyuki and Poor, H Vincent},
  journal={IEEE transactions on wireless communications},
  volume={15},
  number={1},
  pages={537--552},
  year={2015},
  publisher={IEEE}
}

@article{sun2015ergodic,
  title={On the ergodic capacity of MIMO NOMA systems},
  author={Sun, Qi and Han, Shuangfeng and Chin-Lin, I and Pan, Zhengang},
  journal={IEEE Wireless Communications Letters},
  volume={4},
  number={4},
  pages={405--408},
  year={2015},
  publisher={IEEE}
}

@INPROCEEDINGS{10888922,
  author={Bhattacharya, Sagnik and Rajabalifardi, Kamyar and Mohsin, Muhammad Ahmed and Cioffi, John M.},
  booktitle={ICASSP 2025 - 2025 IEEE International Conference on Acoustics, Speech and Signal Processing (ICASSP)}, 
  title={Optimum Power-Subcarrier Allocation and Time-Sharing in Multicarrier NOMA Uplink}, 
  year={2025},
  volume={},
  number={},
  pages={1-5},
  doi={10.1109/ICASSP49660.2025.10888922}}

@book{cover1999elements,
  title={Elements of information theory},
  author={Cover, Thomas M},
  year={1999},
  publisher={John Wiley \& Sons}
}

\end{document}